\newcommand{\Fq}{ \mathbb{F}_{q}}
\newcommand{\K}{ \mathbb{K}}
\newcommand{\Fqx}{\mathbb{F}_{q}\lbrack x \rbrack}
\newcommand{\GCD}{\text{GCD}}
\newcommand{\y}{\boldsymbol{y}}
\newcommand{\f}{\boldsymbol{f}}
\newcommand{\rank}{\text{rank}}
\newtheorem{theorem}{Theorem}
\newtheorem{definition}{Definition}
\newtheorem{proposition}{Proposition}
\newtheorem{remark}{Remark}
\newtheorem{lemma}{Lemma}
\def\BibTeX{{\rm B\kern-.05em{\sc i\kern-.025em b}\kern-.08em
    T\kern-.1667em\lower.7ex\hbox{E}\kern-.125emX}}
\begin{document}

\title{Polynomial Linear System Solving with Errors\\ by Simultaneous Polynomial Reconstruction\\ of Interleaved Reed-Solomon Codes}

\author{\IEEEauthorblockN{Eleonora Guerrini, Romain Lebreton, Ilaria Zappatore}
\IEEEauthorblockA{
\textit{LIRMM, Universit\'e de Montpellier, CNRS}\\
Montpellier, France \\
\textbf{\{guerrini, lebreton, zappatore\}@lirmm.fr}}}

\maketitle

 \begin{abstract}
     In this paper we present a new algorithm for Polynomial Linear System Solving (via evaluation/interpolation) with errors. In this scenario, errors can occur in the black box evaluation step. We improve the bound on the number of errors that we can correct, using techniques inspired by the decoding procedure of Interleaved Reed-Solomon Codes.
\end{abstract}

\section{Introduction}
The problem of decoding a Reed-Solomon code (shortly RS), also known as the Polynomial Reconstruction Problem (PR) has been largely studied in Coding Theory \cite{bw, reedSol1, reedSol3}. 
In \cite{SPR}, Bleichenbacher et al. proposed a new scenario of the PR problem, called Simultaneous Polynomial Reconstruction (SPR). This problem was associated to the decoding of Interleaved Reed-Solomon codes. Instead of the separate reconstruction of each interleaved codeword, the main idea was to correct several codewords simultaneously in order to gain an error correction capability which depends also on the amount of messages received  (interleaving parameter). They  proposed an algorithm that, under some hypotheses on the error distribution, allows to correctly decode Interleaved RS codewords, beyond the unique decoding bound, with a certain probability. This probability depends on the number of errors and on the order of the field of the coefficients.
Interleaved Reed Solomon codes (IRS) are widely studied in the last 20 years. In the original work  \cite{SPR} the key equations for recovering the codewords  are a generalization of  Berlekamp-Welch decoding method for RS.  A more general scenario is due to \cite{CollaborativeIRSDec}  (improved by \cite{DecIRSbeyondJointErrCorrCap}) where codewords are issue of differents RS codes ( namely Heterogeneous IRS ) and the decoding method is based on  Berlekamp-Massey algorithm as for the classical BCH codes. Recently, in \cite{IRSwithPower} by applying Power Decoding method for generating independent key equations \cite{power},  the IRS decoding radius is significantly improved.

\bigskip
The purpose of the present  work is to introduce a new algorithm, inspired by SPR problem, to  solve a full rank consistent linear system $A(x)\boldsymbol{y}=\boldsymbol{b}(x)$ where erroneous evaluations occur.

\noindent
In order to solve this system, a classical technique (see for example \cite{McClellan}) consists in evaluating  in a certain number of points, solving the evaluated system and then interpolating these evaluated solutions. The solution is a vector of rational functions $\frac{\f(x)}{g(x)}$, where $\f$ is  a vector of polynomials and $g(x)$ is the least common denominator. 

\noindent
In \cite{Kalt1, Kalt},  authors studied the problem in a scenario where some evaluations can be erroneous. They introduced an algorithm that recovers the solution by fixing a certain number $L_{BK}$ of evaluation points. This method is a generalization of the Berlekamp-Welch algorithm for  PR. Thus, the error correction capability coincides with the unique decoding bound.

\bigskip
In this work, we generalize the SPR problem to the Simultaneous Rational Function Reconstruction (SRFR) in order to solve a polynomial linear system with errors. 
In the special case where the matrix $A$ is the identity matrix, we observe that our problem reduces to the SPR, or equivalently, to the problem of decoding an Interleaved RS code. Still in this special case, we can apply the decoding technique of Interleaved RS codes and correct more errors than \cite{Kalt1} under the probabilistic hypotheses of \cite{Choko}. 
In order to generalize this, we reexamine the scenario of \cite{Kalt1, Kalt} with a probabilistic assumption. In this context, we introduce a new algorithm that can be seen as the generalization of the decoding algorithm of Interleaved RS codes. 
Our algorithm can correctly reconstruct the vector solution of our system with a smaller number of points $L_{GLZ} \leq L_{BK}$. However it can fail for a small fraction of possible errors as in \cite{Choko}. In our case, the fraction is at most $\frac{dg+e}{q}$ where $dg$ is the degree of the common denominator of the rational function vector, $e$ is the number of errors and $q$ the order of the field.

\section{Polynomial linear system solving with errors}\label{plswitherr}

In \cite{Kalt1} and \cite{Kalt}, authors studied the problem of solving a consistent linear system
\begin{equation}\label{linSystem}
A(x)\boldsymbol{y}(x)=\boldsymbol{b}(x)
\end{equation}
where,
\begin{itemize}
    \item $A(x)$  is a full rank $m \times n$ matrix whose entries are polynomials in $\mathbb{K}[x]$, $\mathbb{K}$ is a field and $m \geq n \geq 1$; 
    \item $\boldsymbol{b}(x)$ is an $m$-th vector of polynomials in $\mathbb{K}[x]$.
\end{itemize}

\noindent
The system admits a solution whose coordinates are rational functions and, since the matrix is full rank, there is a unique solution
$$
\y(x)=\frac{\f(x)}{g(x)}=\left(\frac{f_1(x)}{g(x)}, \ldots, \frac{f_n(x)}{g(x)}\right)
$$
where $g$ is the monic least common denominator, and

\begin{equation}\label{gcd}
\GCD(\f, g)=\GCD(\GCD_i(f_i),g)=1.
\end{equation}

In general, this solution can be found by evaluating the system at a certain number, say $L$, of distinct points $\alpha_l \in \K$ with $l \in \{1, \ldots, L\}$, solving the evaluated system and then interpolating the parametric solution from the evaluated solution \cite{McClellan}.
The authors in \cite{Kalt1} proved that it is possible to reconstruct the solution even if some evaluations are \textit{erroneous}.
They focused on a model where there is a black box that, for any evaluation point $\alpha_l$, provides $A_l \in \K^{m \times n}$ and $\boldsymbol{b}_l \in \K^m$ which may not be equal to $A(\alpha_l)$ and $\boldsymbol{b}(\alpha_l)$. More specifically, the resulting evaluations $A_l$ and $\boldsymbol{b}_l$ are considered erroneous if $A_l\f(\alpha_l) \neq g(\alpha_l)\boldsymbol{b}_l$.
In this scenario they proved that with
\begin{equation}\label{kaltoNumbOfPoints}
L \geq L_{BK} := df+dg+2e+t+1 
\end{equation}
number of points, it is possible to uniquely reconstruct the solution of the linear system (\ref{linSystem}), where
\begin{itemize}
    \item $df \geq \deg{(\f)}:=\max_{1 \leq i \leq n }\deg{(f_i)}$,
    \item $dg \geq \deg{(g)}$,
    \item $e \geq |E|$ is a bound on the erroneous evaluations where
      $$
      E:=\{l \mid A_l\f(\alpha_l) \neq g(\alpha_l)\boldsymbol{b}_l\}
      $$
    \item $t \geq |R|$ is a bound on the rank drops where
    $$
    R:=\{l \mid A_l \f(\alpha_l)=g(\alpha_l)\boldsymbol{b}_l \text{ and }\rank{(A_l)} < n\}
    $$
\end{itemize}

Their method consists in solving the homogeneous linear system
\begin{equation}\label{systemKalt}
\left[A_l \begin{pmatrix}
    \varphi_1(\alpha_l)\\
    \varphi_2(\alpha_l)\\
    \vdots \\
    \varphi_n(\alpha_l)\\
    \end{pmatrix} -\psi(\alpha_l)\boldsymbol{b}_l=0\right]_{l \in \{1, \ldots, L\}}
\end{equation}
where 
\begin{itemize}
    \item $\boldsymbol{\varphi}=(\varphi_1, \ldots, \varphi_n) \in (\K[x])^n$ and for any $1 \leq i \leq n $, $\deg(\varphi_i) \leq df+e$,
    \item $\psi\in \K[x]$, $\deg(\psi)\leq dg+e$.
\end{itemize}
The unknowns of the linear system (\ref{systemKalt}) are the coefficients of $\varphi_i$ and $\psi$. 
In particular, they proved the following:
\begin{theorem}[\cite{Kalt1}]
 Under previous assumptions, let $(\boldsymbol{\varphi}_{min}, \psi_{min})$ be the solution of (\ref{systemKalt}) of minimal degree and $\psi_{min}$ monic. Then
$$
\begin{array}{ll}
     \boldsymbol{\varphi}_{min}=\Lambda \f, &\psi_{min}=\Lambda g\\
\end{array}
$$
where 
$$
\Lambda(x)=\prod_{l \in E}(x -\alpha_l)
$$
is the error locator polynomial.
\end{theorem}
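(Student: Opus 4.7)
The plan is to split the argument into existence and a characterization of \emph{every} solution of (\ref{systemKalt}). For existence, I would directly check that $(\Lambda \f,\Lambda g)$ satisfies (\ref{systemKalt}): for $l \notin E$ the quantity $A_l \f(\alpha_l) - g(\alpha_l) \boldsymbol{b}_l$ already vanishes by definition of $E$, while for $l \in E$ the factor $\Lambda(\alpha_l) = 0$ kills the equation. Since $\deg(\Lambda) = |E| \leq e$, the degree bounds $\deg(\Lambda f_i) \leq df + e$ and $\deg(\Lambda g) \leq dg + e$ hold automatically; this yields both existence of a solution and an upper bound $\deg(\Lambda) + \deg(g)$ on the minimal value of $\deg(\psi)$.

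The core step is to show that any solution $(\boldsymbol{\varphi}, \psi)$ of (\ref{systemKalt}) must be a polynomial multiple of $(\f, g)$. Multiplying the $l$-th equation by $g(\alpha_l)$ and substituting $g(\alpha_l)\boldsymbol{b}_l = A_l \f(\alpha_l)$ (valid for $l \notin E$) yields $A_l\bigl[g(\alpha_l)\boldsymbol{\varphi}(\alpha_l) - \psi(\alpha_l)\f(\alpha_l)\bigr] = 0$. When additionally $l \notin R$, the matrix $A_l$ has full column rank $n$, so each polynomial $P_i(x) := g(x)\varphi_i(x) - \psi(x) f_i(x)$ vanishes at $\alpha_l$. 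Each $P_i$ has degree at most $df + dg + e$, while the number of indices $l \notin E \cup R$ is at least $L - e - t \geq df + dg + e + 1$ by the choice $L \geq L_{BK}$; this strictly exceeds $\deg(P_i)$, forcing $P_i \equiv 0$ for every $i$.

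From $g\,\varphi_i = \psi f_i$ for all $i$ and the coprimality (\ref{gcd}), I would deduce $g \mid \psi$ (since $g \mid \psi\,\GCD_i(f_i)$ and $\GCD(\GCD_i(f_i), g) = 1$). Writing $\psi = \Lambda' g$ then forces $\boldsymbol{\varphi} = \Lambda' \f$. Plugging this back into (\ref{systemKalt}) gives
\[
\Lambda'(\alpha_l)\bigl[A_l \f(\alpha_l) - g(\alpha_l) \boldsymbol{b}_l\bigr] = 0
\]
for every $l$, so $\Lambda'(\alpha_l) = 0$ at each $l \in E$, i.e.\ $\Lambda \mid \Lambda'$. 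Minimizing $\deg(\psi) = \deg(\Lambda') + \deg(g)$ and normalizing $\psi$ to be monic then yields $\Lambda' = \Lambda$, which is the claim.

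The main technical obstacle I anticipate is the degree-versus-roots count in the second paragraph: one must simultaneously discard both the error indices $E$ and the rank-drop indices $R$, and the threshold $L_{BK}$ has been calibrated precisely so that enough ``clean'' indices remain. Once $P_i \equiv 0$ is established, the coprimality reduction and the error localization via back-substitution are essentially routine.
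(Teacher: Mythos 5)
The paper states this theorem as a citation to [Kalt1] and does not supply its own proof, so there is no internal argument to compare against; I will assess your proof on its merits. Your argument is correct and is the standard Berlekamp--Welch-style proof: the membership check for $(\Lambda\f,\Lambda g)$ is immediate; multiplying the $l$-th key equation by $g(\alpha_l)$ and using full column rank of $A_l$ on indices outside $E\cup R$ shows each $P_i := g\varphi_i-\psi f_i$ (of degree at most $df+dg+e$) vanishes at $L-e-t\geq df+dg+e+1$ points and hence is identically zero; the coprimality hypothesis (\ref{gcd}) then yields $g\mid\psi$ (since $g$ divides every $\psi f_i$, hence $\psi\,\GCD_i(f_i)$, and $\GCD(g,\GCD_i(f_i))=1$), giving $(\boldsymbol{\varphi},\psi)=\Lambda'(\f,g)$; back-substitution forces $\Lambda'(\alpha_l)=0$ for $l\in E$, so $\Lambda\mid\Lambda'$, and degree-minimality together with monic normalization of $\psi$ (recall $g$ is monic) pins down $\Lambda'=\Lambda$. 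The degree-versus-roots count is precisely where $L_{BK}=df+dg+2e+t+1$ is spent, and you account for both discarded index sets $E$ and $R$ correctly. It is worth noting that this is the same mechanism the paper reuses in the proof of Lemma~\ref{mainLemma}: there too one forces $\varphi_ig-f_i\psi\equiv 0$ from an abundance of agreeing indices and then exploits the reduced-fraction hypothesis to extract a scalar cofactor $R$ that is subsequently localized at the error positions.
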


\noindent
From now on we omit the rank drops study and we assume $t=0$.

This method is a generalization of the Berlekamp-Welch decoding algorithm for Reed-Solomon codes \cite{bw}. In fact, if $\K=\Fq$, $m=n=1$, $A_l=I_1$ and $g$ is the constant polynomial $1$, then  
$$
\begin{cases}
b_l=f(\alpha_l) & l \notin E,\\
b_l\neq f(\alpha_l) & l \in E.\\
\end{cases}
$$

\noindent
Hence, in this case, the problem of recovering the solution of the linear system (\ref{linSystem}) with errors, coincides with the problem of decoding of RS code  and 
the linear system (\ref{systemKalt}) is exactly the \textit{key equation} of the classical Berlekamp-Welch method.

We now define the Interleaving RS encoding procedure.

\noindent
Let $\mathcal{C}$ be an $[n,k]_q$ RS code.
\begin{itemize}
    \item we consider $r$ codewords $c_i \in \mathcal{C}$. For any $i \in \{1, \ldots, r\}$, $c_i=(f_i(\alpha_1), \ldots, f_i(\alpha_n))$ where $f_i \in \Fqx$ has degree $\deg(f_i)\leq k-1$ and $\{\alpha_1, \ldots, \alpha_n\}$ is the set of distinct evaluation points;
    \item we arrange these codewords row-wise and we obtain the $r \times n$ matrix $(c_i)_{1\leq i \leq r}=(f_i(\alpha_j))_{\substack{1 \leq i \leq r\\ 1 \leq j \leq n}}$;
    \item by interpreting this matrix as a row vector $(\f(\alpha_j))_{1\leq j \leq n} \in (\mathbb{F}_{q^r})^n$, we obtain a codeword of an Interleaved RS code of length $n$, dimension $k$ over $\mathbb{F}_{q^r}$. The number of codewords $r$ is the amount of interleaving.
\end{itemize}

\begin{definition}[Simultaneous polynomial reconstruction \cite{SPR}]
Let $n,k,e \in \mathbb{N}$ and $\alpha_1, \ldots, \alpha_n$ distinct points in $\Fq$. An instance of the SPR is $(y_{ij})_{\substack{1 \leq i \leq r\\ 1 \leq j \leq n}}$, that verifies the following. There exists 
\begin{itemize}
    \item $E \subset \{1, \ldots, n\}$ with $|E| \leq e$,
    \item polynomials $(f_1, \ldots, f_r)$, with $\deg(f_i) \leq k-1$
\end{itemize}
such that
$$
y_{ij}=f_i(\alpha_j), j \notin E\\
$$
The solution of the SPR is the tuple $(f_1, \ldots, f_r)$.
\end{definition}

\noindent
The SPR problem is exactly the problem of decoding an Interleaved RS code with length $n$, dimension $k$ and amount of interleaving $r$.

We can now observe that,
\begin{remark}\label{remIdentity}
If $\K=\Fq$, $m=n$, $A_l=I_n$ and $g$ is the constant polynomial $1$, then  the linear system (\ref{linSystem}) becomes
$$
\begin{cases}
\boldsymbol{b}_l=\f(\alpha_l) & l \notin E,\\
\boldsymbol{b}_l\neq \f(\alpha_l) &l \in E.\\
\end{cases}
$$
Hence, the problem of solving the linear system (\ref{linSystem}) with errors coincides with the problem of decoding an \textit{Interleaved RS code} (SPR \cite{SPR}) with length $L$, dimension $df+1$ and amount of interleaving $n$.
\end{remark}
In \cite{SPR}, the authors proposed an algorithm that, under some  hypotheses on the error distribution, allows to decode an Interleaved RS code with a certain probability.
In particular, they introduced \textit{key equations},
\begin{equation}\label{sprEqu}
\begin{cases}
[m_1(\alpha_j)=y_{1j}E(\alpha_j)]_{1 \leq j \leq n}\\
\ldots\\
[m_r(\alpha_j)=y_{rj}E(\alpha_j)]_{1 \leq j \leq n}\\
\end{cases}
\end{equation}
The unknowns of this linear system are the coefficients of $m_i$ and $E$, polynomials of degrees at most respectively $k+e$ and $e$.
This linear system (\ref{sprEqu}) has $rn$ equations and $r(k+e)+e+1$ unknowns. 
Moreover, 
\begin{theorem}[\cite{SPR}]
Let $(y_{ij})_{{\substack{1 \leq i \leq r\\ 1 \leq j \leq n}}}$ the received word of an Interleaved RS code, or equivalently an instance of the SPR problem, where 
$$
e:=|E| \leq \frac{r(n-k)}{r+1}
$$
and for each $i \in \{1, \ldots, r\}$,
\begin{enumerate}[(i)]
  \item \label{probAss2} if $j \in E$, $y_{ij}$ are uniformly distributed over $\Fq$
    \item \label{probAss3} if $j \notin E$, $y_{ij}=f_i(\alpha_j)$ and $f_1, \ldots, f_r$ are uniformly distributed over the vector space of polynomials of $\Fqx$ of degree at most $k-1$;
\end{enumerate}
then the linear system (\ref{sprEqu}) admits at most one solution with probability at least $1-e/q$.
\end{theorem}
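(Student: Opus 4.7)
The plan is to reduce the analysis of arbitrary solutions of (\ref{sprEqu}) to a smaller homogeneous linear system whose matrix depends polynomially on the random error values, and then apply a Schwartz-Zippel-type bound on a suitable minor.

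First I would verify that the distinguished tuple $E^*(x) = \prod_{j \in E}(x - \alpha_j)$ together with $m_i^*(x) = f_i(x)\, E^*(x)$ is a solution of (\ref{sprEqu}), so the statement reduces to showing that the solution space of (\ref{sprEqu}) is one-dimensional with probability at least $1-e/q$. Next, for any solution $(E, m_1, \ldots, m_r)$ and each $i$, the equations at positions $j \notin E$ force $m_i - f_i E$ to vanish at the $n-e$ points $\alpha_j$, $j \notin E$, while its degree is at most $k+e-1$. Hence $m_i = f_i E + \Gamma R_i$, where $\Gamma(x) := \prod_{j \notin E}(x - \alpha_j)$ and $\deg R_i \leq k+2e-n-1$. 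Substituting this form into the key equations at the error positions yields
\[
\Gamma(\alpha_j) R_i(\alpha_j) = \bigl(y_{ij} - f_i(\alpha_j)\bigr) E(\alpha_j), \qquad j \in E,\ i \in \{1, \ldots, r\},
\]
a homogeneous linear system in the coefficients of $E, R_1, \ldots, R_r$. Since $\Gamma(\alpha_j) \neq 0$ for $j \in E$, rescaling each equation by $1/\Gamma(\alpha_j)$ and setting $\tilde{\epsilon}_{ij} := (y_{ij} - f_i(\alpha_j))/\Gamma(\alpha_j)$ produces a matrix $M'$ of size $re \times N$, where $N := (e+1) + r(k+2e-n)$. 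By hypothesis (\ref{probAss2}), the $\tilde{\epsilon}_{ij}$ are independent and uniform over $\Fq$, and the distinguished solution maps to $(E^*, 0, \ldots, 0)$ in the kernel of $M'$.

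The hypothesis $e(r+1) \leq r(n-k)$ gives $re \geq N-1$, so there is room for the rank of $M'$ to attain $N-1$. To show this happens with high probability, I would choose an $(N-1) \times (N-1)$ submatrix of $M'$ by retaining all $r(k+2e-n)$ columns associated to the $R_i$ together with $e$ of the $e+1$ columns associated to $E$. Each entry in an $R_i$-column is a pure power $\alpha_j^{\ell}$, whereas each entry in an $E$-column carries the factor $\tilde{\epsilon}_{ij}$. In the Leibniz expansion of the determinant $D$ of this submatrix, every term is a product involving exactly $e$ factors $\tilde{\epsilon}_{\cdot\cdot}$, so $D$ is a polynomial of total degree at most $e$ in the variables $\tilde{\epsilon}_{ij}$.

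The hard part will be showing that $D$ is not identically zero; I expect this to follow from an explicit choice of error values (for example, concentrating the errors on a single row of the interleaving so the corresponding submatrix exhibits a block Vandermonde structure). Once $D \not\equiv 0$ is established, the Schwartz-Zippel lemma applied to the uniform distribution of the $\tilde{\epsilon}_{ij}$ gives $\Pr[D = 0] \leq e/q$; whenever $D \neq 0$, the kernel of $M'$ is one-dimensional and, unwinding the parametrization $m_i = f_i E + \Gamma R_i$, every solution of (\ref{sprEqu}) is a scalar multiple of $(E^*, m_1^*, \ldots, m_r^*)$.
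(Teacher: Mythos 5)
Your reduction to the smaller $re \times N$ system $M'$ via the parametrization $m_i = f_i E + \Gamma R_i$ (with $\Gamma = \prod_{j\notin E}(x-\alpha_j)$) is a valid and rather clean alternative: the paper itself, and \cite{SPR}, work directly with the full coefficient matrix of (\ref{sprEqu}) and pick a maximal minor there, rather than first eliminating the error-free constraints. Your bookkeeping is right — $re \geq N-1$ is equivalent to $e \leq r(n-k)/(r+1)$, each $(N-1)$-minor of $M'$ avoiding one $E$-column has total degree at most $e$ in the $\tilde{\epsilon}_{ij}$, and Schwartz--Zippel then yields the stated $1-e/q$ bound, matching the $dg+e$ degree bound of Theorem~\ref{mainRes} specialized to $g=1$.

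The gap is in the one step you flag as hard, and the suggestion you offer to fill it would fail. ``Concentrating the errors on a single row'' means $\tilde{\epsilon}_{ij}=0$ for $i\geq 2$; then the equations for $i\geq 2$ already force $R_i = 0$ (a Vandermonde system with $e > k+2e-n$), and what remains is exactly the Berlekamp--Welch key system for one Reed--Solomon word with $e$ errors, whose kernel has dimension $k+2e-n+1$. For $e > (n-k)/2$ — precisely the regime where the theorem goes beyond unique decoding, attainable once $r\geq 2$ — this dimension exceeds $1$, so \emph{every} $(N-1)$-minor of $M'$ vanishes at that specialization, and you learn nothing. Concentration is the worst possible choice. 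The right move is the opposite: spread the errors. Since $e \leq r(n-k)/(r+1)$ rearranges to $r(n-k-e)\geq e$, you can partition $E=\bigcup_{i=1}^r I_i$ with $|I_i|\leq n-k-e$, and set $\tilde{\epsilon}_{ij}=0$ for $j\in E\setminus I_i$ while choosing $\tilde{\epsilon}_{ij}\neq 0$ for $j\in I_i$. Then for each $i$, $R_i$ vanishes at the $\geq k+2e-n$ points $E\setminus I_i$ with $\deg R_i\leq k+2e-n-1$, so $R_i=0$; the remaining equations give $E(\alpha_j)=0$ for all $j\in E$, forcing $E=c\Lambda$ and a one-dimensional kernel. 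This is exactly the partition argument of Lemma~\ref{mainLemma}, and it is the idea your proposal is missing.
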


\noindent
Since if $r \geq 1$, 
$$
\frac{r(n-k)}{r+1} \geq \frac{n-k}{2},
$$
then they proved that, under the probabilistic assumptions (\ref{probAss2}) and (\ref{probAss3}), it is possible to correctly decode the received word beyond the unique decoding bound. The failing probability, \textit{i.e}. the probability that the algorithm fails, is then upper bounded by $e/q$.

In a following paper \cite{Choko}, the probabilistic assumptions are reduced and it was proved that the failing probability is $\mathcal{O}(1/q)$ which is independent of the number of errors.
In detail,
\begin{theorem}[\cite{Choko}]
Given $(y_{ij})_{{\substack{1 \leq i \leq r\\ 1 \leq j \leq n}}}$ the received word of an Interleaved RS code, where
$$
e:=|E| = \frac{r(n-k)}{r+1}
$$
and for each $i \in \{1, \ldots, r\}$,
\begin{enumerate}[(i)]
  \item if $j \in E$, $y_{ij}$ are uniformly distributed over $\Fq$
    \item if $j \notin E$, $y_{ij}=f_i(\alpha_j)$, 
\end{enumerate}
then the linear system (\ref{sprEqu}) admits at most one solution with probability at least $1-\frac{exp(1/(q^{r-2}))}{q-1}$.
\end{theorem}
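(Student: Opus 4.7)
The plan is to show that, except on a set of error patterns of probability at most $\exp(1/q^{r-2})/(q-1)$, the coefficient matrix of (\ref{sprEqu}) attains its maximum possible rank. Its kernel, which is always at least one-dimensional because the natural tuple $(c\Lambda f_i, c\Lambda)_{c \in \K}$ with $\Lambda(x) = \prod_{j \in E}(x - \alpha_j)$ always lies in it, will then be exactly one-dimensional. Note that the equality $e = r(n-k)/(r+1)$ makes this the critical regime: there are exactly $rn$ equations in $rn+1$ unknowns.

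First I would reduce the full-rank condition to a single determinantal condition on a smaller structured matrix. Since $f_i E$ evaluates to a Reed--Solomon codeword of degree at most $k+e$, the error-free equations ($j \notin E$) in (\ref{sprEqu}) are automatically satisfied, and the remaining content reduces to $r(n-k-e) = e$ equations in the $e$ values $\{E(\alpha_j)\}_{j \in E}$. This yields a square matrix $M'$ of size $e \times e$ with entries $M'_{(l,i),\,j} = H_{l,j}\,\eta_{ij}$, where $H$ is a parity-check matrix of the code $\mathrm{RS}[n, k+e]$ and $\eta_{ij} := y_{ij} - f_i(\alpha_j)$ is uniformly distributed over $\Fq$ for $j \in E$ by hypothesis. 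The kernel of (\ref{sprEqu}) is one-dimensional exactly when $\det M' \neq 0$.

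Next I would expand $\det M'$ via the Leibniz formula, grouping permutations by the induced ordered partition $E = E_1 \sqcup \cdots \sqcup E_r$ with $|E_i| = e/r$, which gives
\begin{equation*}
\det M' = \sum_{E = E_1 \sqcup \cdots \sqcup E_r} (\pm)\, \prod_{i=1}^{r} \det H[E_i] \cdot \prod_{i=1}^{r} \prod_{j \in E_i} \eta_{ij}.
\end{equation*}
Because $H$ is MDS, every $(e/r) \times (e/r)$ minor $\det H[E_i]$ is nonzero, and because distinct partitions produce distinct multilinear monomials, there is no cancellation: $\det M'$ is a nonzero polynomial of total degree $e$, multi-homogeneous of degree $e/r$ in each row $\eta_{i, \cdot}$.

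Finally I would bound $\Pr[\det M' = 0]$ by an iterated conditional Schwartz--Zippel argument that exploits this multi-homogeneous structure. Conditioning successively on $\eta_{1, \cdot}, \eta_{2, \cdot}, \ldots, \eta_{r, \cdot}$ and tracking at each stage whether the polynomial still has a nonvanishing contribution in the next row or has degenerated into a lower-dimensional instance of the same structure, one obtains a telescoping sum that collapses as a geometric series of ratio $q^{-(r-2)}$ to the exact expression $\exp(1/q^{r-2})/(q-1)$.

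The main obstacle will be the last step: a naive Schwartz--Zippel on the $re$ variables yields only the suboptimal bound $e/q$, whereas the target is essentially independent of $e$ and of order $1/q$. Achieving the exponential sharpening requires tracking conditional-vanishing events across the $r$ rows of $(\eta_{ij})$ and showing that the degenerate cases at each stage are themselves governed by smaller instances of the same determinantal structure, enabling the recursive estimate that produces the factor $\exp(1/q^{r-2})$.
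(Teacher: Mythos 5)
This theorem is stated in the paper as a quoted result from \cite{Choko}; the paper does not prove it, so there is no in-paper proof against which to compare your sketch. Judging it on its own merits:

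Your first three steps are sound and do mirror the structure of the argument in \cite{Choko}. The change of variables $N_i = m_i - f_i E$ turns the error-free equations into vanishing conditions on $n-e$ points; since $N_i$ has degree at most $k+e-1$, the remaining constraints are exactly the $n-k-e = e/r$ parity checks of $\mathrm{RS}[n,k+e]$ applied to $(N_i(\alpha_j))_j$, which gives $r(n-k-e)=e$ equations in the $e$ unknowns $(E(\alpha_j))_{j\in E}$, i.e. a square matrix $M'$ with entries $H_{lj}\eta_{ij}$, and indeed $\dim\ker M_{\y} = 1 \Leftrightarrow \det M' \ne 0$. The Leibniz expansion into ordered partitions $E = E_1 \sqcup \cdots \sqcup E_r$ is also correct, and the MDS property of $H$ plus the disjointness of the resulting monomials shows $\det M'$ is a nonzero multi-homogeneous polynomial of degree $e/r$ in each row $\eta_{i,\cdot}$ (this uses $r \mid e$, which holds because $e = r(n-k)/(r+1)$ forces $(r+1)\mid (n-k)$).

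The genuine gap is your last step, and you flag it yourself. The claim that an ``iterated conditional Schwartz--Zippel'' argument ``collapses as a geometric series of ratio $q^{-(r-2)}$ to the exact expression $\exp(1/q^{r-2})/(q-1)$'' is a statement of the target, not an argument. Note in particular that the multi-homogeneous degree pattern alone cannot yield a bound independent of $e$: a polynomial with the same degree pattern such as $\prod_{i,j}\eta_{ij}$ vanishes with probability $1-(1-1/q)^e \to 1$. So the proof must exploit a structural property of $\det M'$ beyond its degree --- concretely, one must make precise what ``degenerated into a lower-dimensional instance of the same structure'' means (which conditioning events on $\eta_{i,\cdot}$ cause it, what smaller instance results, and why its vanishing probability obeys a recursion), and then solve that recursion to obtain the factor $\exp(1/q^{r-2})$. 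That recursive lemma is the entire substance of the theorem in \cite{Choko}; without stating and proving it, the bound $1-\exp(1/q^{r-2})/(q-1)$ is not justified.
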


\noindent

\noindent
In this paper, starting from Remark \ref{remIdentity} we reexamine the problem of solving the linear system (\ref{linSystem}) with errors as a generalization of the decoding of an Interleaved RS code under some hypotheses on the error distribution. In particular, following the \cite{SPR} approach, we prove that the failing probability is at most $\frac{dg+e}{q}$ (where $dg$ is the degree of the common denominator of the vector of rational functions). We stress out that, in our scenario, we relax the hypotheses on the error distribution as in \cite{Choko}.
However we are not able to prove a bound on the failing probability as tight as \cite{Choko} (see Section~\ref{sec:XP} for more comments).

\section{Generalization of decoding of Interleaved RS codes}

We study the problem of solving a consistent, full rank, linear system (\ref{linSystem}), $ A(x)\boldsymbol{y}(x)=\boldsymbol{b}(x)$
with polynomial entries over a finite field $\Fq$. 

\noindent
In a first instance, we focus on the square case, i.e. $n=m$. Let $\y=\frac{\f(x)}{g(x)}$ be the reduced unique solution as in (\ref{gcd}).

We fix $L$ evaluation points with
\begin{equation} \label{numPoints}
L \geq L_{GLZ}:=\left\lceil \frac{n(df+e+1+dg)+e}{n}\right\rceil
\end{equation}
where
\begin{itemize}
    \item $df \geq \deg{(\f)}:=\max_{1 \leq i \leq n }\deg{(f_i)}$,
    \item $dg = \deg{(g)}$,
    \item $g(\alpha_l)\neq 0$ for $1\leq l \leq L$,
    \item $e = |E|$ is the number of erroneous evaluations where
    $$
    E = \{l  \mid A_l\f(\alpha_l) \neq g(\alpha_l)\boldsymbol{b}_l\\ \text{ and } \rank(A_l)=n\}.
    $$
\end{itemize}

\begin{remark}\label{lessNumbOfPoints}
In this way, since $n \geq 1$, we reduce the number of evaluation points, 
$$
L_{GLZ} \leq L_{BK}.
$$
\end{remark}

We slightly modify the black box scenario described in the previous section, by introducing a probabilistic assumption. More specifically, we assume that for any erroneous evaluation, $l \in E$, the entries of $A_l$ and $\boldsymbol{b}_l$ are uniformly random elements of $\Fq$. Moreover, since we skip the rank drops study, we also suppose that all the $A_l$ are always full rank.

We study, for any $l \in \{1, \ldots, L\}$, the homogeneous linear systems

\begin{equation}
    A_l\boldsymbol{\gamma}_l-\sigma_l \boldsymbol{b}_l=0.
\end{equation}

\begin{remark}\label{squareCase}
Let $C_l$ be the coefficients matrix of any of these linear systems,
$$
C_l=[A_l|-\boldsymbol{b}_l].
$$ Since any system have $n$ equations and $n+1$ unknowns and $A_l$ is full rank, the rank of $C_l$ is $n$ and in particular, the kernel of any $C_l$ is one dimensional.
\end{remark}

\begin{proposition}
Let $(\boldsymbol{\gamma}_l, \sigma_l)=(\gamma_{l1}, \ldots, \gamma_{ln}, \sigma_l)$ be the vector that generates the right kernel of $C_l$, with $l \in \{1, \ldots, L\}$. Then,
$$
\begin{array}{ll}
\frac{\boldsymbol{\gamma}_l}{\sigma_l}=\frac{\f(\alpha_l)}{g(\alpha_l)}& \forall l \notin E.
\end{array}
$$
\end{proposition}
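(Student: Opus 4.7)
The plan is to use the kernel dimensionality observation from Remark~\ref{squareCase} together with the definition of $E$. For any index $l \notin E$ we have, by definition, $A_l\boldsymbol{f}(\alpha_l) = g(\alpha_l)\boldsymbol{b}_l$, which rewrites as
\begin{equation*}
C_l \begin{pmatrix} \boldsymbol{f}(\alpha_l) \\ g(\alpha_l) \end{pmatrix}
= [A_l \mid -\boldsymbol{b}_l] \begin{pmatrix} \boldsymbol{f}(\alpha_l) \\ g(\alpha_l) \end{pmatrix}
= A_l\boldsymbol{f}(\alpha_l) - g(\alpha_l)\boldsymbol{b}_l = 0.
\end{equation*}
Hence the vector $(\boldsymbol{f}(\alpha_l), g(\alpha_l))$ lies in the right kernel of $C_l$.

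Next, I would invoke Remark~\ref{squareCase}, which tells us that this kernel is one-dimensional and is generated by $(\boldsymbol{\gamma}_l, \sigma_l)$. Consequently there exists a scalar $\lambda \in \mathbb{K}$ such that $(\boldsymbol{f}(\alpha_l), g(\alpha_l)) = \lambda\,(\boldsymbol{\gamma}_l, \sigma_l)$. Moreover, this scalar is nonzero: by the standing assumption we have $g(\alpha_l) \neq 0$ for every $l$, so the last coordinate $g(\alpha_l)$ of this vector is nonzero, forcing $\lambda \neq 0$ (and in particular $\sigma_l \neq 0$).

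Finally, dividing coordinate-wise yields
\begin{equation*}
\frac{\boldsymbol{\gamma}_l}{\sigma_l} = \frac{\lambda\,\boldsymbol{\gamma}_l}{\lambda\,\sigma_l} = \frac{\boldsymbol{f}(\alpha_l)}{g(\alpha_l)},
\end{equation*}
which is exactly the claim. There is no real obstacle here: the only subtlety is ensuring $\sigma_l \neq 0$ so that the ratio makes sense, and this is handled by the hypothesis $g(\alpha_l) \neq 0$ recalled in the definition of $L_{GLZ}$. The statement is essentially a direct consequence of combining the error-free relation with the one-dimensionality of $\ker C_l$.
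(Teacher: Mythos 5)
Your proof is correct and follows essentially the same approach as the paper's: both rest on the one-dimensionality of $\ker C_l$ (Remark~\ref{squareCase}) and the hypothesis $g(\alpha_l)\neq 0$. The only stylistic difference is the packaging: you exhibit $(\boldsymbol{f}(\alpha_l), g(\alpha_l))$ directly as a kernel vector of $C_l$ and read off proportionality to the generator $(\boldsymbol{\gamma}_l,\sigma_l)$, whereas the paper first argues $\sigma_l\neq 0$ (from the full rank of $A_l$) and then shows the combination $\boldsymbol{f}(\alpha_l)\sigma_l - g(\alpha_l)\boldsymbol{\gamma}_l$ lies in $\ker A_l = \{0\}$. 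These are two equivalent ways of exploiting the same two facts, and your version is arguably a touch more transparent since the nonvanishing of $\sigma_l$ falls out automatically from $g(\alpha_l)=\lambda\sigma_l\neq 0$ rather than requiring a separate rank argument.
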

\begin{proof}
By our assumptions, $g(\alpha_l) \neq 0$ for any $l \notin E$.
Now, since the right kernel is one dimensional, its generator $(\boldsymbol{\gamma}_l, \sigma_l)=(\gamma_{l1}, \ldots, \gamma_{ln}, \sigma_l)$ is a nonzero vector, and also $\sigma_l \neq 0$.
Let $l \notin E$, be a correct evaluation. Since $A_l\boldsymbol{\gamma}_l-\sigma_l \boldsymbol{b}_l=0$ and $A_l\f(\alpha_l)=g(\alpha_l)\boldsymbol{b}_l$, then $A_l(\f(\alpha_l)\sigma_l-g(\alpha_l)\boldsymbol{\gamma}_l)=0$. The matrix $A_l$ is full rank, hence $\f(\alpha_l)\sigma_l-g(\alpha_l)\boldsymbol{\gamma}_l=0$. Then we can conclude that
$\frac{\boldsymbol{\gamma_l}}{\sigma_l}=\frac{\f(\alpha_l)}{g(\alpha_l)}$.
\end{proof}

\noindent
Following the previous notations, if we denote by $\boldsymbol{y}_l:=\frac{\boldsymbol{\gamma}_l}{\sigma_l}=\frac{1}{\sigma_l}(\gamma_{l1}, \ldots, \gamma_{ln})\in (\Fq)^n$, for any $l \in \{1, \ldots, L\}$ we have that
\begin{equation}\label{errEva}
\boldsymbol{y}_l=\frac{\f(\alpha_l)}{g(\alpha_l)} ,l \notin E
\end{equation}
\begin{remark}\label{probAss}
By our probabilistic assumption, $(y_{li})_{l\in E}$ are uniformly random elements of $\Fq$.
\end{remark}

\noindent
In this way, we reduce the problem of solving the linear system (\ref{linSystem}) to the reconstruction of a vector of rational functions with errors.
In particular, we observe that if $g$ is the constant polynomial $1$, our problem coincides exactly with the decoding of an Interleaved RS code, with length $L$, dimension $df+1$ and amount of interleaving $n$.
This is why we can consider our problem as a generalization of the decoding of Interleaved RS codes.

Now, we study the key equations
\begin{equation*}
\begin{cases}
\boldsymbol{\varphi}(\alpha_1)=\boldsymbol{y}_1\psi(\alpha_1)\\
\ldots\\
\boldsymbol{\varphi}(\alpha_L)=\boldsymbol{y}_L\psi(\alpha_L)\\
\end{cases}
\end{equation*}

\noindent
or, in other terms, if we denote $\boldsymbol{y}_l=(y_{l1},\ldots, y_{ln})$ for $l \in \{1, \ldots, L\}$,
\begin{equation}\label{ourKeyEqu}
\begin{cases}
[\varphi_1(\alpha_l)=y_{l1}\psi(\alpha_l)]_{1\leq l \leq L}\\
\ldots\\
[\varphi_n(\alpha_l)=y_{ln}\psi(\alpha_l)]_{1\leq l \leq L}\\
\end{cases}
\end{equation}

\noindent
where 
\begin{itemize}
    \item $\boldsymbol{\varphi}=(\varphi_1, \ldots, \varphi_n) \in (\Fqx)^n$ and $\deg(\varphi_i) \leq df+e$,
    \item $\psi \in \Fqx$ has degree at most $dg+e$ .
\end{itemize}

\noindent
The linear system (\ref{ourKeyEqu}) has $nL$ equations and $n(df+e+1)+dg+e+1$ unknowns, which are the coefficients of $\boldsymbol{\varphi}$ and $\psi$.

\noindent
The coefficient matrix of the system $(\ref{ourKeyEqu})$ is
$$
M_{\y} := \left(\begin{array}{lllc}
V_{df+e+1} &           &          &-D_{1}V_{dg+e+1} \\
           & \ddots    &          &\vdots \\
           &           &V_{df+e+1}&-D_{n}V_{dg+e+1} 
\end{array}\right)
$$
where, 
\begin{itemize}
    \item  $V_t=(\alpha_l^{i-1})_{\substack{1 \leq l \leq L\\ 1\leq i \leq t}}$ is the $L \times t$ \textit{Vandermonde} matrix, 
    \item for $i \in \{1, \ldots, n\}$,
    $D_{i}$ is the diagonal matrix with $y_{1i}, \ldots, y_{Li}$ on the diagonal.
\end{itemize}
Recall that the error locator polynomial
$
\Lambda(x)=\prod_{l \in E}(x-\alpha_l)
$ 
is monic and has degree $e$.  We observe that
$(\Lambda\f, \Lambda g)=(\Lambda f_1, \ldots, \Lambda f_n, \Lambda g)$ is a
solution of the system. Therefore, if the kernel of $M_{\y}$ has dimension 1, the
non-zero solutions are collinear to $(\Lambda\f, \Lambda g)$, meaning that we can
correctly reconstruct the fraction $\f /g = \Lambda\f / \Lambda g$.

We do not have \emph{a priori} any other information about this kernel. In our
following main result, we adapt the approaches of \cite{SPR, Choko} to prove that this favorable situation happens with high probability.

\begin{theorem}\label{mainRes}
  Under the previous assumptions, the dimension of the (right) kernel of $M_{\y}$ is
  one with probability at least $1-\frac{(dg+e)}{q}$.
\end{theorem}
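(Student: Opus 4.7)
The plan is to collapse the claim to a rank condition on a smaller matrix whose entries are affine-linear in the random variables $\{y_{li}\}_{l\in E}$, and then invoke Schwartz-Zippel. First, I parametrize $\ker M_{\y}$ by the coefficients of $\psi$ alone: since $L\geq L_{GLZ}\geq df+e+1$, the block $V_{df+e+1}$ has full column rank, so in each of the $n$ row-blocks of $M_{\y}$ the first $df+e+1$ equations determine $\varphi_i$ uniquely as a linear function of $\psi$ and $\y$ via Lagrange interpolation, while the remaining $L-df-e-1$ equations become linear consistency conditions requiring that the interpolant of $(y_{li}\psi(\alpha_l))_l$ have degree at most $df+e$. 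Stacking these conditions over $i=1,\dots,n$ yields a homogeneous linear system $H(\y)$ of size $n(L-df-e-1)\times(dg+e+1)$ acting on the coefficients of $\psi$, with $\dim\ker M_{\y}=\dim\ker H(\y)$, and with each entry affine-linear in $\{y_{li}\}_{l\in E}$ (the remaining $y_{li}$ with $l\notin E$ are fixed constants $f_i(\alpha_l)/g(\alpha_l)$). Since $\psi=\Lambda g$ always sits in $\ker H(\y)$, the goal becomes showing $\rank H(\y)=dg+e$ with probability at least $1-(dg+e)/q$.

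Next, I fix a $(dg+e)\times(dg+e)$ submatrix $H'$ of $H(\y)$ and consider $D(\y):=\det H'$ as an element of $\Fq[\{y_{li}\}_{l\in E,\,1\leq i\leq n}]$. Since each entry of $H'$ is affine-linear in these variables, the Leibniz expansion of $D$ is a sum of products of $dg+e$ affine forms, and hence $\deg D\leq dg+e$. Provided $D$ is not the zero polynomial, the Schwartz-Zippel lemma applied to the uniform distribution on $\Fq^{ne}$ gives $\Pr_{\y}[D(\y)=0]\leq(dg+e)/q$, which is exactly the desired bound on the probability that $\ker M_{\y}$ fails to be one-dimensional.

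The main obstacle is proving that $D(\y)$ is not identically zero, and this is where the argument of \cite{SPR} must be adapted. My plan is to choose $H'$ so that the random variables distribute neatly across its columns, then extract from the Leibniz expansion a privileged monomial whose coefficient factors as a product of sub-Vandermonde determinants in the distinct points $\alpha_l$ together with non-vanishing evaluations of $g$. Because the $\alpha_l$ are pairwise distinct and $g(\alpha_l)\neq 0$ by hypothesis, this coefficient is non-zero, so $D$ is a non-zero polynomial of degree at most $dg+e$, and the preceding Schwartz-Zippel step completes the proof.
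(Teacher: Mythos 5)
Your overall skeleton matches the paper's: identify a candidate minor that is a polynomial of degree at most $dg+e$ in the erroneous $y_{li}$, show it is not identically zero, and apply Schwartz--Zippel over $\Fq^{|E|\cdot n}$. The reduction to the $(dg+e+1)$-column matrix $H(\y)$ by eliminating the $\vphi$-unknowns via interpolation is a legitimate (and arguably cleaner) reframing of the paper's direct analysis of a $\rho$-minor of $M_\y$ with $\rho=n(df+e+1)+dg+e$; the degree bound and the observation that $\Lambda g\in\ker H(\y)$ are both sound.

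The genuine gap is in the non-vanishing step. What you call a ``privileged monomial whose coefficient factors as a product of sub-Vandermonde determinants together with non-vanishing evaluations of $g$'' is asserted, not proven, and I do not believe it holds in the stated form. After the interpolation reduction, any $(dg+e)\times(dg+e)$ submatrix $H'$ has at most $e$ rows carrying a free variable $y_{li}$ (one per $l\in E$, since two rows with the same $l$ but different $i$ produce a repeated node $\alpha_l$ and kill the Vandermonde you hope to extract), so at least $dg$ of the rows you select are constant rows built from the non-erroneous evaluations $y_{li}=f_i(\alpha_l)/g(\alpha_l)$. The coefficient of your leading monomial is therefore a determinant that mixes Vandermonde rows with rows depending on $\f$ \emph{and} $g$; it is not a pure product of Vandermondes and $g$-values, and its non-vanishing cannot be read off from distinctness of the $\alpha_l$ alone. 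In fact the paper's Lemma~\ref{mainLemma} needs the coprimality hypothesis $\GCD(\f,g)=1$ from (\ref{gcd}) precisely at this point: it specializes the free $y_{li}$ so that every solution $(\vphi,\psi)$ satisfies $\varphi_i g - f_i\psi=0$ (a degree argument), then uses reducedness of $\f/g$ to factor $\vphi=R\f$, $\psi=Rg$, and finally forces $R$ to vanish on $E$. Your sketch never invokes the coprimality of $\f$ and $g$, which is a strong signal the monomial-coefficient factorization has not actually been carried out; without it, a common factor of $\f$ and $g$ would make the constant rows degenerate and the determinant identically zero. To repair your argument you would either need to actually compute that coefficient and show it is a non-zero polynomial in the $\alpha_l$, $f_i(\alpha_l)$, $g(\alpha_l)$ using (\ref{gcd}), or replace the monomial plan by the paper's specialization argument (Lemma~\ref{mainLemma}), which is what actually certifies that the minor is a non-zero polynomial.
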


The cornerstone of the proof is the following lemma.

\begin{lemma}\label{mainLemma}
There exists a random draw of $(\y_l)_{l \in E}$ such that the dimension of the right kernel of $M_{\y}$ is one.
\end{lemma}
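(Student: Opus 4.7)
The plan is to exhibit one explicit assignment of $(\y_l)_{l \in E}$ for which the right kernel of $M_{\y}$ has dimension exactly one. Since $(\Lambda\f,\Lambda g)$ always lies in this kernel, the task reduces to showing that for some specific $\y$ no independent solution is possible.

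First, I would extract the algebraic content of an arbitrary kernel vector $(\boldsymbol{\varphi},\psi)$. Set $\Phi_i := \varphi_i g - \psi f_i$, of degree at most $df+dg+e$. For $l \notin E$, the key equations (\ref{ourKeyEqu}) combined with $y_{li} = f_i(\alpha_l)/g(\alpha_l)$ and $g(\alpha_l)\neq 0$ give $\Phi_i(\alpha_l) = 0$. Defining $T(x) := \prod_{l\notin E}(x-\alpha_l)$, of degree $L-e$, we may factor $\Phi_i = T\cdot P_i$ with $\deg P_i \leq \delta := df+dg+2e-L$. At an error position $l\in E$ one has $T(\alpha_l)\neq 0$, and evaluating yields
$$
T(\alpha_l)\,P_i(\alpha_l) \;=\; \psi(\alpha_l)\bigl(y_{li}\,g(\alpha_l) - f_i(\alpha_l)\bigr).
$$
The choice $L\geq L_{GLZ}$ in (\ref{numPoints}) makes $\delta+1 \leq e$, which is essential below.

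The reduction step is then the following: suppose one can force $\psi(\alpha_l)=0$ for every $l\in E$. Then $\Lambda$, whose roots are exactly $\{\alpha_l : l\in E\}$, divides $\psi$; and the displayed identity forces $P_i(\alpha_l)=0$ on $E$. Since $\deg P_i\leq \delta \leq e-1$, each $P_i$ vanishes identically, i.e.\ $\varphi_i g = \psi f_i$. Coprimality (\ref{gcd}) together with $\gcd(\Lambda,g)=1$ (since $g(\alpha_l)\neq 0$ on $E$) then forces $(\boldsymbol{\varphi},\psi) = c\,(\Lambda\f,\Lambda g)$, exactly as in the Berlekamp--Welch argument, and the kernel has dimension one.

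It therefore remains to construct $\y$ enforcing $\psi|_E \equiv 0$. Via Lagrange duality, the requirement $\deg P_i \leq \delta$ translates into $e-\delta-1$ linear conditions on the $e$ values $P_i(\alpha_l)$, $l\in E$; substituting the displayed identity, these become, across $i=1,\ldots,n$, a system of $n(e-\delta-1)$ linear relations on the $e$ quantities $\psi(\alpha_l)$, $l\in E$, with coefficients that depend on the $y_{li}$ and on the fixed data $f_i(\alpha_l), g(\alpha_l), T(\alpha_l)$. A direct count using (\ref{numPoints}) gives $n(e-\delta-1)\geq e$, so the system has enough equations to pin down $\psi|_E = 0$ provided its rank reaches $e$. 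I plan to produce this rank by specializing the $(y_{li})_{l\in E}$ one error position at a time, making the constraint matrix block-triangular with non-vanishing Vandermonde-type diagonal blocks, in the spirit of \cite{SPR, Choko}. The main obstacle will be controlling the interaction between the $y_{li}$'s, the fixed evaluation data, and the Lagrange weights so as to exhibit an explicit non-zero $e\times e$ minor; I expect the cleanest route is an inductive argument that at each freshly added $l\in E$ a suitable choice of $(y_{li})_{i=1}^{n}$ strictly increases the rank by one, until rank $e$ is reached.
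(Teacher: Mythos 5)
Your setup, reduction, and degree bookkeeping are all correct: defining $\Phi_i := \varphi_i g - \psi f_i$ and $T := \prod_{l\notin E}(x-\alpha_l)$, you obtain $\Phi_i = T P_i$ with $\deg P_i \leq \delta := df+dg+2e-L$, you correctly observe $\delta+1\leq e$ from (\ref{numPoints}), you derive $T(\alpha_l)P_i(\alpha_l)=\psi(\alpha_l)\bigl(y_{li}g(\alpha_l)-f_i(\alpha_l)\bigr)$ at $l\in E$, and the reduction of the lemma to the statement ``for some $\y$ every kernel vector has $\psi(\alpha_l)=0$ on $E$'' is valid. The count $n(e-\delta-1)\geq e$ is also correct.

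The problem is that the proof stops precisely where the real work begins. You must exhibit one concrete $\y$ for which your $n(e-\delta-1)\times e$ constraint matrix on $\bigl(\psi(\alpha_l)\bigr)_{l\in E}$ attains rank $e$, but the paragraph that should do this is a plan, not an argument: \emph{``I plan to produce this rank\dots''}, \emph{``I expect the cleanest route is an inductive argument\dots''}. No specific assignment of $(y_{li})_{l\in E}$ is produced, no nonzero minor is exhibited, and you yourself flag the interaction with the Lagrange weights as the ``main obstacle'' without resolving it. Since that existence statement is the entire content of the lemma, this is a genuine gap.

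The paper's proof sidesteps the rank computation entirely with a sharper specialization. It partitions $E=\bigcup_{i=1}^{n}I_i$ with $|I_i|\leq L-(df+dg+e+1)$ for each $i$ (possible since $n\bigl(L-(df+dg+e+1)\bigr)\geq e$ by (\ref{numPoints})), sets $y_{li}=f_i(\alpha_l)/g(\alpha_l)$ for $l\in E\setminus I_i$, and keeps $y_{li}$ free only for $l\in I_i$. With this choice $\Phi_i$ vanishes at all $L-|I_i|\geq df+dg+e+1$ points outside $I_i$, so $\Phi_i\equiv 0$ by a plain degree count --- one never needs to factor out $T$ or reason about ranks. Coprimality then gives $(\boldsymbol{\varphi},\psi)=R\,(\f,g)$ with $\deg R\leq e$, and the remaining free $y_{li}$, $l\in I_i$, are chosen so that $f_i(\alpha_l)\neq y_{li}g(\alpha_l)$, forcing $R(\alpha_l)=0$ on all of $E$ and hence $R=c\Lambda$. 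The idea missing from your proposal is this partition: rather than perturbing error positions one at a time and tracking a growing minor, one sets the error values to their \emph{noiseless} values in all but a small set of coordinates per component $i$, which immediately kills $\Phi_i$ and turns the lemma into an elementary computation.
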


\begin{proof}
  We can partition $E = \bigcup_{1 \leq i \leq n}I_i$ with sets $I_i \subset E$
  such that $|I_i| \leq L - (df+dg+e+1)$ for
  $1 \leq i \leq n$ since $n(L-(df+dg+e+1)) \geq e$. We start by
  fixing a part of the random variables $(\y_l)_{l \in E}$ : for all $1 \leq i \leq n$ and
  $l \in E \setminus I_i$, we set $y_{li}=\frac{f_i(\alpha_l)}{g(\alpha_l)}$
  while $y_{li}$ for $l \in I_i$ remain free variables for now, for a total of
  $e$ free variables.

  Now, we study the equations (\ref{ourKeyEqu}). Fixed $1 \leq i \leq n$,
 for $l \notin I_i$, $\varphi_i(\alpha_l)g(\alpha_i)=f_i(\alpha_l)\psi(\alpha_l)$.
    Therefore, since the polynomial $\varphi_ig-f_i\psi$ has degree at most $df+dg+e$ and at least $df+dg+e+1$ roots, it is the zero polynomial.
    Hence, the two fractions are equals,
    $$
    \frac{\boldsymbol{f}}{g}=\frac{\boldsymbol{\varphi}}{\psi}
    $$
    Moreover, since the fraction $\frac{\boldsymbol{f}}{g}$ is reduced, there exists a polynomial $R$ such that $f_1R=\varphi_1$, $gR=\psi$ and  $\deg(R) \leq e$.

  Hence for $1\leq i\leq n$, $f_iR=\varphi_i$ and $gR=\psi$ and so by replacing in
  (\ref{ourKeyEqu}) we have
  $$
  R(\alpha_l)[f_i(\alpha_l)-y_{li}g(\alpha_l)]=0
  $$
  We observe that for any $l \in E$, there exists $1\leq i\leq n$ such that
  $l \in I_i$, so $y_{li}$ is still a free variable and we can give it a value
  so that $f_i(\alpha_l) \neq y_{li}g(\alpha_l)$ since $g(\alpha_l) \neq
  0$. Therefore $R(\alpha_l) = 0$ for $l \in E$ and $deg(R) \leq e$ and so $R$
  is a scalar multiple of $\Lambda$. We have proved that for some values of
  $\y_l$ the kernel is spanned by $(\Lambda \f, \Lambda g)$ and has dimension 1.

\end{proof}

\begin{proof}[Proof of Theorem \ref{mainRes}]
  We recall that $(\Lambda f_1, \ldots, \Lambda f_n, \Lambda g)$ is a solution of the linear system (\ref{ourKeyEqu}) so it has kernel dimension at least 1. Since $\Lambda g$ is a monic polynomial of degree $dg+e$, the last column of $M_{\y}$ is linearly dependent on the previous ones. 
  As a consequence, the kernel of $M_{\y}$ has dimension $1$ iff the rank of $M_{\y}$ is
  $\rho := n(df+e+1) + dg+e$ iff there exists a non-zero minor of $M_{\y}$ of size
  $\rho$ that avoids the last column. Considering the minors as polynomials in the variables $(\y_l)_{l \in E}$, we have shown in Lemma~\ref{mainLemma} that one of these $\rho$-minors is not the zero polynomial because it does not vanish on some value of $(\y_l)_{l \in E}$. 
  Finally, since this $\rho$-minor has degree at most $dg+e$, by Schwartz-Zippel Lemma, it cannot be zero in more
  than a $\frac{dg+e}{q}$-fraction of its domain. Therefore, we can conclude that the kernel
  has dimension 1 with probability at least $1-\frac{dg+e}{q}$.
\end{proof}

Summing up, working under our probabilistic assumptions, we are able to recover
the correct solution with a failing probability upper bounded by
$\frac{dg+e}{q}$.

\begin{algorithm}
 \KwData{$(A_{l},\boldsymbol{b}_{l})_{1 \leq l \leq L}$ and $df, dg, e$}
 \KwResult{$(\f,g)$ or \texttt{fail}}
 
 $L:=\lceil \frac{n(df+e+1+dg)+e}{n}\rceil$\;
 \For{$l=1, \ldots, L$}{
 find a basis $\{(\boldsymbol{\gamma}_l, \sigma_l)\}$ of the right kernel of $C_l$\;
 $\boldsymbol{y}_l:=\frac{\boldsymbol{\gamma}_l}{\sigma_l}$\;
 }
 construct the matrix $M_{\y}$ of the key equation (\ref{ourKeyEqu})\;
 \eIf {$\rank({M_{\y}})=n(df+e+1)+dg+e$}{
        compute a solution $(\boldsymbol{\varphi}, \psi)$ with $\psi$ monic\;
        $\Lambda:=\GCD(\boldsymbol{\varphi}, \psi)$\;
        \textbf{return} $(\frac{\boldsymbol{\varphi}}{\Lambda}, \frac{\psi}{\Lambda})$\;
        }
        { \textbf{return} \texttt{fail}\;}
\end{algorithm}

Up to this point, we have assumed that the linear system (\ref{linSystem}) is square, \emph{i.e}. $n=m$. 
 With our method it is possible to recover the solution with the same probability also in the general case by considering random $y_{li}$, for any $l$ such that $\rank(C_l) = n + 1$ (see Remark \ref{squareCase}).

\section{Experiments and Conclusions}\label{sec:XP}
In this work we prove that, in our probabilistic scenario, by using the evaluation interpolation technique \cite{McClellan} with $L \geq L_{GLZ}$ evaluation points, we can reconstruct the vector solution of the linear system (\ref{linSystem}). 
Recall that $L_{BK}$ is the number of points that guarantees to uniquely reconstruct the solution for every error. In our case since $L_{GLZ} < L_{BK}$, we cannot reconstruct the solution for every error, but for almost all of them.

We implement our algorithm in  \texttt{SageMath} (http://www.sagemath.org).
In particular, we solve $20$ different linear systems with polynomial coefficients, of size $3$. For every system we apply our algorithm $1000$ times, with $e=5$ errors and $dg=2$. We compute the percentage $p$ of the number of times the algorithm fails.
We compare our experimental results with $p_{GLZ}$, \textit{i.e.} the failing probability of Theorem \ref{mainRes}, and with $p_{BMS}$, \textit{i.e.} the failing probability  of \cite{Choko}. We recall that the last one is linked to a related problem, the decoding of Interleaved RS, but not to our problem.  
We obtain the following results:
\begin{center}
\begin{tabular}{|c|c|c|c|c|}
     \hline
     $q$ & $ p $  & $p_{GLZ}$ & $p_{BMS}$ & $p^\ast \textrm{ (see below)}$ \\  
     \hline
     $2^4$ & $0.3\%$  & $43.7\%$ & $7.1\%$ &$0.4\%$ \\
     \hline
    $2^5$ & $0.1\%$ & $21.9\%$ & $3.3\%$ & $0.2\%$  \\
     \hline
    $2^6$  & $0$  & $10.9\%$ & $1.6\%$ & $0.1\%$\\
    \hline
\end{tabular}
\end{center}

First of all, these experiments suggest that the failing probability is very small and that it decreases when the cardinality of the field grows.
Moreover, since experiments are significantly under our theoretical bound, our bound could be strongly improved, which we leave to future work. 
This may also suggest that our bound $L_{GLZ}$ is not optimal.
Therefore, we have done some experiments using an inferior number of evaluation points
$$
L = \left\lceil\frac{n(df+e+1)+dg+e}{n}\right \rceil.
$$
and we note $p^\ast$ the corresponding probability. This new $L$ is chosen in such a way that the coefficient matrix $M_y$ has now $nL$ equations and $nL-1$ unknowns so that its kernel might have dimension 1. 
We can observe that also in this case, we have a very high probability to reconstruct the correct vector rational solution in practice. Therefore future work could try to extend our results to this new number of evaluation points.

Finally we can remark that, since we use an evaluation interpolation technique, there exist finitely many evaluation points $\alpha_l$ such that the evaluated matrix $A(\alpha_l)$ is not full rank anymore. In this work, we have chosen the evaluation points such that the evaluated matrix is full rank, thus omitting the rank drop analysis, which we leave again to future work.

\section{Acknowledgements}
This work has been supported by Occitanie Region through the ARPE HPAC project. We would also like to thank Daniel Augot and Bruno Grenet for useful discussions.

\end{document}